\documentclass[conference]{IEEEtran}

\IEEEoverridecommandlockouts    

\usepackage{cite}
\usepackage{graphicx}
\usepackage{amsfonts}
\usepackage{amsmath} 
\usepackage{amssymb}

\usepackage{amsthm}	
\usepackage{dsfont}
\usepackage{enumerate}
\usepackage{xcolor}
\usepackage{comment}
\usepackage{subfig}
\usepackage{pifont}
\usepackage{array}

\newtheorem{thm}{Theorem}

\newtheorem{defn}{Definition}








\renewcommand{\arraystretch}{0.9}



\newcommand{\Rset}{\mathbb{R}}

\newcommand{\Uset}{\mathbb{U}}

\newcommand{\Xset}{\mathbb{X}}




\newcounter{l1}
\newcounter{l2}
\newcounter{l3}
\setlength{\itemsep}{0cm} \setlength{\itemindent}{0in}
\newcommand{\bdotlist}{\begin{list}{$\bullet$}{}}
\newcommand{\bboxlist}{\begin{list}{$\Box$}{}}
\newcommand{\bbboxlist}{\begin{list}{\raisebox{.005in}{{\tiny
$\blacksquare$ \ \ }}}{}}
\newcommand{\bdashlist}{\begin{list}{$-$}{} }
\newcommand{\blist}{\begin{list}{}{} }
\newcommand{\barablist}{\begin{list}{\arabic{l1}}{\usecounter{l1}}}
\newcommand{\balphlist}{\begin{list}{(\alph{l2})}{\usecounter{l2}}}
\newcommand{\bAlphlist}{\begin{list}{\Alph{l2}.}{\usecounter{l2}}}
\newcommand{\bdiamlist}{\begin{list}{$\diamond$}{}}
\newcommand{\bromalist}{\begin{list}{(\roman{l3})}{\usecounter{l3}}}

\def\ts{\Delta}
\def\leak{\lambda}
\def\etaIn{\eta^{c}}
\def\etaOut{\eta^{ d}}
\def\tikzcheckmark{\checkmark} 
\def\tikzxmark{\scriptsize \ding{53}} 
\def\paper{paper}

\title{When are Lossy Energy Storage Optimization Models Convex?\\
\thanks{
The authors are with the School of Electrical and Computer Engineering, Cornell University, Ithaca, NY, 14853, USA.  Emails:  Feras Al Taha (foa6@cornell.edu), Eilyan Bitar (eyb5@cornell.edu).   
This work was supported in part by a postgraduate fellowship from the Natural Sciences and Engineering Research Council of Canada and in part by the Cornell Engineering Sprout Awards program. 
The authors thank Professor Tyrone Vincent for  helpful discussions during the course of this research.}%
}

\author{\IEEEauthorblockN{Feras Al Taha, \textit{Student Member, IEEE,} \,  Eilyan Bitar, \textit{Member, IEEE}}}

\begin{document}

\maketitle
\thispagestyle{empty}
\pagestyle{empty}

\begin{abstract}
     We examine a class of optimization problems involving the optimal operation of a single lossy energy storage system, where energy losses occur during charging and discharging. These inefficiencies typically lead to a nonconvex set of feasible charging and discharging power profiles. In this paper, we derive an equivalent reformulation of this class of optimization problems by eliminating the charging and discharging power variables and recasting the problem entirely in terms of the storage state-of-charge variables. We show that the feasible set of the proposed reformulation is always convex. We also provide sufficient conditions under which the objective function of the proposed reformulation is guaranteed to be convex. The conditions provided both unify and generalize many existing conditions for convexity in the literature. 
\end{abstract}

\begin{IEEEkeywords}
Energy Storage Systems, Convex Optimization
\end{IEEEkeywords}

\section{Introduction} \label{sec:introduction}

With the surge in growth of wind and solar energy sources in the power grid, energy storage systems have become increasingly important tools to manage the variability in power supplied by these intermittent renewable energy resources. To effectively optimize the planning and operation of an energy storage system, one must account for the potential dissipation of energy while charging or discharging the storage system. However, the treatment of such nonidealities in the energy storage model can lead to nonconvex optimization problems.

A commonly used model that accounts for energy losses in the storage dynamics involves representing the storage charging and  discharging powers using distinct decision variables. While this representation results in a linear model for the energy storage dynamics, one must enforce additional ``complementarity constraints'' to prevent the occurrence of simultaneous charging and discharging. These complementarity constraints can be expressed using either bilinear equality constraints \cite{elsaadany2023battery} or binary variables \cite{chen2013mpc,jabr2014robust,parisio2014model,erdinc2014smart,wu2015stochastic,hemmati2017technical,rostamiyan2023miqp}, both of which give rise to nonconvex optimization problems. 

To side-step this nonconvexity, one can simply drop the complementarity constraints, resulting in a convex outer approximation (i.e., a superset) of the original nonconvex feasible set \cite{haessig2021convex}. 
This simple convex relaxation can be strengthened by incorporating additional valid convex inequalities  \cite{pozo2022linear, pozo2023convex, elsaadany2023battery}.
However, a crucial drawback of these convex relaxations is that they frequently yield solutions that are infeasible for the original problem, i.e.,  solutions where charging and discharging occur simultaneously.
Alternatively, to prevent these infeasibilities, one can construct a convex inner approximation (i.e., a subset) of the original feasible set \cite{nazir2021guaranteeing,elsaadany2023battery}.
While these inner approximations are guaranteed to only contain feasible solutions, a shortcoming is that they may exclude optimal solutions to the original problem.

To address the problem of infeasibility without introducing additional constraints that restrict the feasible set, there are a number of papers in the literature that provide sufficient conditions for the exactness of the convex relaxation obtained solely by dropping the complementarity constraints.
Such conditions guarantee that optimal solutions to the relaxed optimization problem never involve simultaneous charging and discharging, ensuring they are both feasible and optimal for the original nonconvex optimization problem.
For example, the convex relaxation has been shown to be exact for ``copper plate’’ economic dispatch problems with lossy energy storage when the total cost of power generation at each stage is an increasing linear function \cite{garifi2018control} or an increasing  convex quadratic function \cite{wu2016distributed}. 

Several authors have also investigated the exactness of such relaxations when applied to transmission-constrained economic dispatch and unit commitment problems with lossy energy storage \cite{castillo2013profit,li2015sufficient,duan2016improved,li2018extended,garifi2020convex,chen2022battery}. 
To guarantee exactness of the convex relaxation, these papers provide conditions that depend explicitly on the optimal solution of the relaxed optimization problem. 
A crucial drawback of such conditions is that they can only be verified a posteriori, that is, after having solved the convex relaxation itself. 
As an exception, Lin \textit{et al.} \cite{lin2023relaxing} provide conditions for exactness that can be verified a priori, but impose an additional restriction on the energy storage system's initial and maximum state-of-charge. 

\subsection{Main contributions}

In this \paper, we take the perspective of a storage owner-operator, and study a class of optimization problems involving the optimal operation of a single lossy energy storage system in the absence of power transmission constraints. Optimization problems of this kind arise in a variety of applications, including demand response, peak shaving, frequency regulation, and energy price arbitrage.

To begin, we show that there exists a piecewise affine bijection between the set of feasible power profiles and the set of feasible energy (i.e., state-of-charge) profiles associated with a lossy energy storage system. 
Using this bijection, we eliminate the power profile with a substitution of variables. This yields an equivalent reformulation of the original optimization problem, where, in the reformulation, the energy profile is the sole decision variable. 
Crucially, the set of feasible energy profiles is shown to be a convex polytope, which is explicitly expressed in half-space representation. 
In contrast to the various convex inner and outer approximations that have been proposed in the literature, the convex reformulation of the feasible set proposed in this \paper \ is exact (all feasible solutions are preserved without introducing infeasible solutions).

We also derive sufficient conditions on the objective function of the original optimization problem that guarantee convexity of the objective function of the reformulated optimization problem. 
These sufficient conditions for convexity  (which can be checked a priori) both unify and generalize a number of existing conditions for convexity in the literature, which either require smoothness of the objective function \cite{li2018extended,garifi2020convex}, assume that the objective function is piecewise affine \cite{garifi2018control,hashmi2019optimal}, or impose restrictive conditions on the energy storage system's initial and maximum state-of-charge \cite{lin2023relaxing}. 
Moreover,  existing conditions for convexity require the objective function to be nondecreasing in both the storage charging and discharging power variables \cite{li2018extended,garifi2020convex, lin2023relaxing}. In contrast,  the convexity of our reformulated objective function only requires the original objective function to be nondecreasing in the storage charging power variables, without requiring monotonicity with respect to the storage discharging power variables.  

\subsubsection*{Organization}
The remainder of this \paper \ is organized as follows. 
In Section \ref{sec:formulation}, we present the class of lossy energy storage optimization models considered in this paper.
In Section \ref{sec:convex_reformulation}, we provide an equivalent reformulation for this class of optimization problems, along with sufficient conditions for their convexity.
Section \ref{sec:conclusion} concludes the~\paper.

\subsubsection*{Notation} 
Denote by $\Rset$ the set of real numbers, and by $\Rset_+$  the set of nonnegative real numbers. 
Given two vectors $x \in \Rset^n$ and $y \in \Rset^m$, we let $(x, \, y) \in \Rset^{n+m}$ denote the vector formed by stacking $x$ and $y$ vertically.  
Given a vector $x\in\Rset^n$, we let  $x^+ := \max\{0,x\}$ and $x^-:=\min \{0,x\}$  represent the element-wise positive and negative parts of the vector $x$, respectively.
Given two vectors $x,y\in\Rset^n$, the relation $x \succeq y$ means that $x-y \in \Rset_+^n$.

\section{Problem Formulation} \label{sec:formulation}
In this section, we present a conventional model for the optimal operation of a lossy energy storage system with energy leakage and conversion inefficiencies.

We consider a discrete-time model of the energy storage system (ESS) dynamics, with time periods of duration $\ts > 0$, indexed by $t=0,1, \dots, T$. We let $u_t \in \Rset$ represent the ESS charging power level during time period $t$. A positive charging power level represents the injection of energy into the ESS, while a negative charging power level represents the withdrawal of energy from the ESS. 
We let $x_t \in \Rset$ represent the energy level (or state-of-charge) of the ESS at the beginning of time period $t$. The ESS state-of-charge evolves according to 
\begin{align} \label{eq:dyn}
    \hspace{-.05in} x_{t+1} = \leak x_t + \ts \Big ( \etaIn u_t^+ + \frac1\etaOut u_t^- \Big ), \quad t= 0, \dots, T-1,
\end{align}
where the initial state-of-charge $x_0 \in \Rset$ is assumed to be given. The scalar parameters  $\etaIn \in (0, 1]$ and $\etaOut \in (0, 1]$ represent the ESS  charging and discharging efficiencies, respectively, and $\leak \in (0, 1]$ represents the self-discharge rate of the ESS. The dynamics specified in Eq. \eqref{eq:dyn} are nonlinear, because of the need to model the dissipation of energy due to charging and discharging separately.\footnote{The lossy energy storage model specified in \eqref{eq:dyn} is equivalent to the commonly used model in the literature which employs separate decision variables to represent charging and discharging power. Although such models result in linear storage dynamics, they require additional nonconvex complementarity constraints to disallow the simultaneous charging and discharging of the ESS.} 

The ESS energy and power variables must also satisfy the following constraints:
\begin{align}
     \label{eq:energy_con}   \underline{x}_t &\leq x_t \leq \overline{x}_t, \quad t=1, \dots, T\\
   \label{eq:power_con} -\underline{u}_t &\leq u_t \leq \overline{u}_t, \quad t= 0, \dots, T-1. 
\end{align}
Here, the parameter $\overline{x}_t \geq 0$ represents the maximum amount of energy that can be stored at time $t$, $\underline{x}_t \ge 0$ represents the smallest amount of energy that can be stored at time $t$, $\overline{u}_t  \geq 0$ represents the maximum charging power at time $t$, and  $\underline{u}_t \geq 0$ represents the maximum discharging power at time $t$. 

Before presenting the optimization model considered in this \paper, it will be helpful to express the dynamics in \eqref{eq:dyn} in terms of a nonlinear mapping $\phi: \Rset^T \rightarrow \Rset^T$ from the ESS \emph{power profile} $u: = (u_0, \dots, u_{T-1})$ to the corresponding \emph{energy profile} $x:= (x_1, \dots, x_T)$, which is given by
\begin{align} \label{eq:phi_map}
    x = \phi(u) := A \left( \etaIn u^+ + \frac1\etaOut u^-\right) + b.
\end{align}
Here, $b := (\leak x_0, \, \leak^2 x_0, \, \dots, \, \leak^T x_0) \in \Rset^T$ and $A\in\Rset^{T\times T}$ is a lower triangular matrix with nonnegative entries  given by $A_{ij}:=\ts \leak^{i-j}$ for all $j\le i$.
It is also important to note that the matrix $A$ is invertible since it is lower triangular and all of the elements along its diagonal are nonzero. Its inverse is given by $A^{-1} = (1/\Delta)(I - \leak L)$, where $I \in \Rset^{T \times T}$ is the identity matrix and $L \in \Rset^{T \times T}$ is the lower shift matrix, i.e., a matrix with ones along its subdiagonal and zeros everywhere else.
Using the relationship in Eq. \eqref{eq:phi_map}, we can express the family of ESS optimization problems studied in this \paper \ as
\begin{align}
\nonumber \text{minimize}    &   \  \ c(u) \\
\label{eq:opt} \text{subject to}  &  \ \ x = \phi(u)\\
\nonumber & \ \ \underline{x} \preceq x \preceq \overline{x}\\
\nonumber & \hspace{-.077in} - \underline{u} \preceq u \preceq \overline{u}.
\end{align}
Here, the decision variables are  $x\in \Rset^T$ and $u \in \Rset^T$, and the objective function  $c: \Rset^T \rightarrow \Rset$ is assumed to be convex. The vectors $\overline{u}:=(\overline{u}_0,\dots,\overline{u}_{T-1})$ and $\underline{u}:=(\underline{u}_0,\dots, \underline{u}_{T-1})$ encode the maximum charging and discharging limits on the power profile, respectively,  and the vectors $\overline{x}:=(\overline{x}_1,\dots,\overline{x}_T)$ and  $\underline{x}:=(\underline{x}_1,\dots,\underline{x}_T)$ encode the upper and lower limits on the energy storage profile,~respectively.  

A power profile $u \in \Rset^T$ is said to be feasible for problem~\eqref{eq:opt} if it induces an energy profile $x = \phi(u)$ such that both profiles $(x,u)$  satisfy constraints \eqref{eq:energy_con}-\eqref{eq:power_con}. 
Formally, one can express the \emph{set of feasible power profiles} as 
\begin{align} \label{eq:uset}
    \Uset := \left \{u\in\Rset^T \, | \,  u \in [-\underline{u}, \, \overline{u}], \ \, \phi(u) \in [\underline{x}, \, \overline{x}] \right\}.
\end{align}
It is important to recognize that the nonlinear equality constraint $x  = \phi(u)$ in problem \eqref{eq:opt} can lead to a nonconvex set of feasible power profiles. We illustrate this point in  Fig. \ref{fig:uset}, where we provide a simple example of a lossy energy storage system with a nonconvex set of feasible power profiles.  

It is also important to note that the set of feasible power profiles $\Uset$ is guaranteed to be convex if the energy storage system is lossless (i.e., $\etaIn=\etaOut=1$), or restricted to charging only (i.e., $u_t\ge 0$ for all stages $t$) or discharging only (i.e., $u_t \le 0$ for all stages $t$).

\begin{figure}[t!]
    \centering
    \subfloat[Set of feasible power profiles $\Uset$]{\label{fig:uset} \includegraphics[width=0.53\columnwidth]{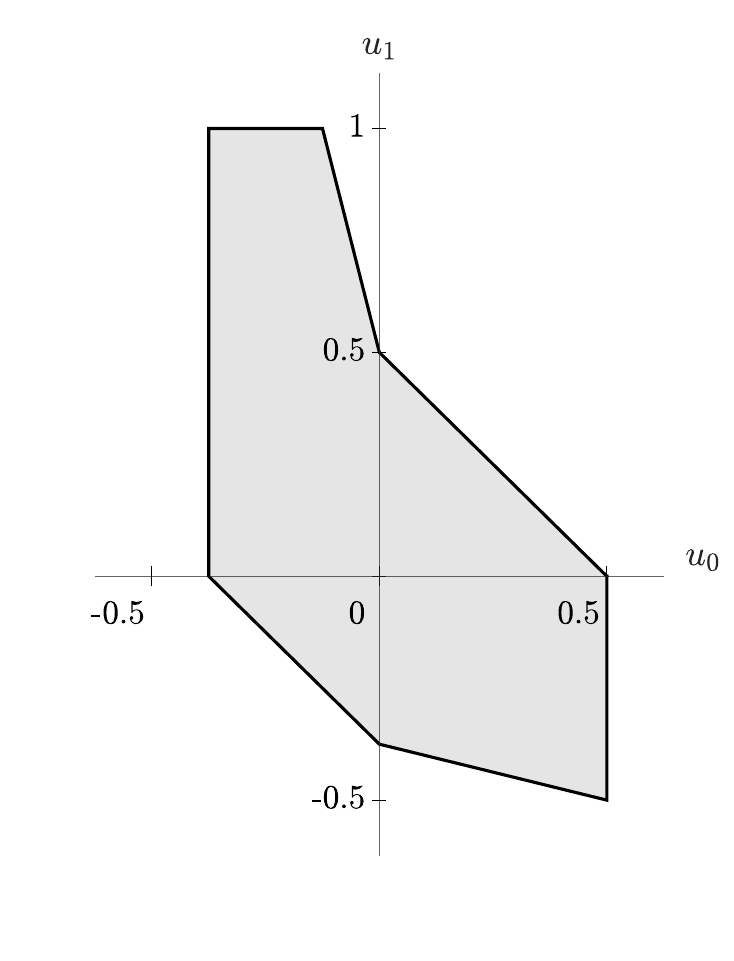}}\quad 
    \subfloat[Set of feasible energy profiles $\Xset=\phi(\Uset)$]{\label{fig:xset} \includegraphics[width=0.53\columnwidth]{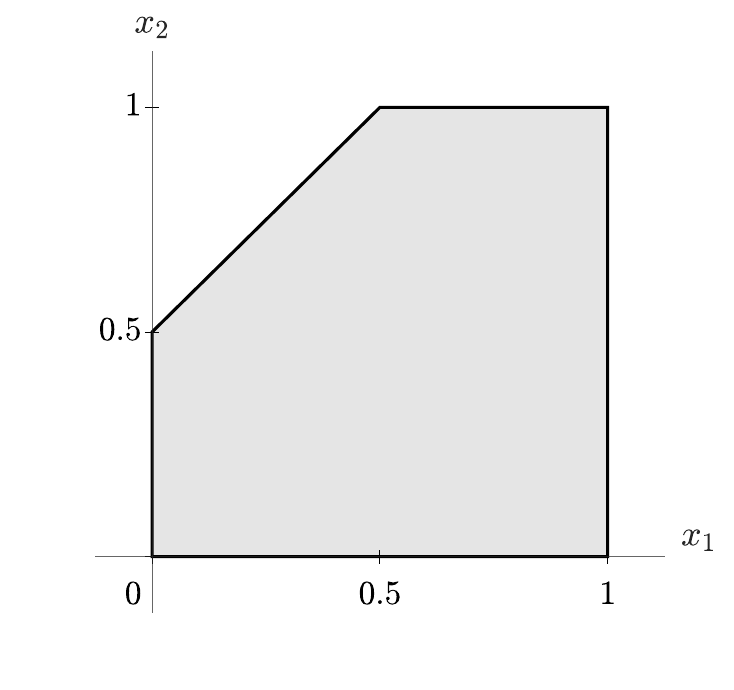}\quad}
    \caption{Depiction of (a) the set of feasible power profiles $\Uset$ and (b) the set of feasible energy profiles $\Xset=\phi(\Uset)$ for a simple two-period ($T=2$) lossy energy storage system model with parameters: $\etaIn=\etaOut=0.5$, $\ts=1$, $x_0=0.75$, $\leak=1$, $\underline{u}=(1,1)$, $\overline{u}=(1,1)$, $\underline{x}=(0,0)$, and $\overline{x}=(1,1)$.}
    \label{fig:feas_set}
\end{figure}

\section{Convex Reformulation} \label{sec:convex_reformulation}

In this section, we show that the mapping $\phi$ from power to energy profiles is a piecewise affine bijection. 
This key property allows us to recast the original problem \eqref{eq:opt} solely in terms of energy profiles.  Importantly, we show that, unlike the potentially nonconvex set of feasible power profiles, the set of feasible energy profiles is always a convex set. We also provide sufficient conditions on the objective function that ensure its convexity under the proposed reformulation of problem \eqref{eq:opt}.
The following result will play a key role in the derivation of these results.

\begin{thm} \label{lem:phi} \rm
    The function $\phi: \Rset^T \rightarrow \Rset^T$, as defined in Eq. \eqref{eq:phi_map}, is concave and invertible.\footnote{When referring to convexity (resp. concavity) of a vector-valued function, we mean that each of its scalar-valued component functions is a convex (resp. concave) function.}  Furthermore, its inverse, denoted by $\phi^{-1}$, is a convex function  given by
    \begin{align} \label{eq:phi_inv}
        \phi^{-1}(x) =  \frac{1}{\etaIn} \left(A^{-1}(x-b) \right)^+ + \etaOut \left(A^{-1}(x-b) \right)^-
    \end{align}  
    for all $x \in \Rset^T$.
\end{thm}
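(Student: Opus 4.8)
\textit{Plan of proof.} The cleanest route is to factor $\phi$ through a single scalar nonlinearity applied coordinatewise, so first I would isolate that scalar map. Define $g:\Rset\to\Rset$ by $g(s):=\etaIn s^+ + \tfrac{1}{\etaOut} s^-$. Because $0<\etaIn\le 1\le \tfrac{1}{\etaOut}$ (equivalently $\etaIn\etaOut\le 1$), one checks that $\etaIn s\le \tfrac{1}{\etaOut}s$ for $s\ge 0$ and $\etaIn s\ge \tfrac{1}{\etaOut}s$ for $s\le 0$, so that $g(s)=\min\{\etaIn s,\ \tfrac{1}{\etaOut}s\}$. Thus $g$ is concave (a pointwise minimum of two linear functions), continuous, and strictly increasing (positive slope $\tfrac{1}{\etaOut}$ on $(-\infty,0]$ and positive slope $\etaIn$ on $[0,\infty)$), hence a bijection of $\Rset$. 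Inverting branch by branch gives $g^{-1}=h$, where $h(y):=\tfrac{1}{\etaIn}y^+ + \etaOut y^- = \max\{\tfrac{1}{\etaIn}y,\ \etaOut y\}$, which is convex as a pointwise maximum of two linear functions.

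Next I would lift these facts to $\Rset^T$. Let $G,H:\Rset^T\to\Rset^T$ apply $g$ and $h$ entrywise. By the definition in Eq.~\eqref{eq:phi_map}, $\phi(u)=A\,G(u)+b$. Since each coordinate of $G$ is the bijection $g$, $G$ is a bijection with inverse $H$; since $A$ is invertible (with $A^{-1}=\tfrac1\ts(I-\leak L)$ as already noted) and $x\mapsto x-b$ is a bijection, it follows that $\phi$ is a bijection with $\phi^{-1}(x)=H\big(A^{-1}(x-b)\big)$. Writing the right-hand side out coordinatewise is exactly the claimed expression in Eq.~\eqref{eq:phi_inv}.

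It then remains to settle the concavity/convexity claims. For $\phi$: its $i$-th component is $\phi_i(u)=\sum_{j\le i}A_{ij}\,g(u_j)+b_i$ with $A_{ij}=\ts\leak^{\,i-j}\ge 0$; each term $u\mapsto g(u_j)$ is concave (it is the concave map $g$ composed with a coordinate projection), a nonnegative linear combination of concave functions is concave, and adding the constant $b_i$ preserves concavity, so every component of $\phi$ is concave. For $\phi^{-1}$: its $i$-th component is $\phi^{-1}_i(x)=h\big((A^{-1}(x-b))_i\big)$, i.e.\ the convex scalar function $h$ composed with an affine function of $x$; since convexity is preserved under precomposition with an affine map, every component of $\phi^{-1}$ is convex.

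The step I would flag as the only real subtlety is the asymmetry between the two halves of the argument: concavity of $\phi$ genuinely relies on the nonnegativity of the entries of $A$, but that route is unavailable for $\phi^{-1}$ because $A^{-1}=\tfrac1\ts(I-\leak L)$ has negative subdiagonal entries; there one must instead use invariance of convexity under affine precomposition, which needs no sign condition. Apart from that, the whole argument rests on the elementary branch-ordering observation $\etaIn\etaOut\le 1$, which holds because both efficiencies lie in $(0,1]$; an alternative to the factorization argument would be to verify $\phi\circ\phi^{-1}=\mathrm{id}$ directly, but bookkeeping the positive and negative parts makes that messier than necessary.
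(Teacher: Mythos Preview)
Your proof is correct and follows essentially the same route as the paper: factor $\phi$ as $A$ composed with an entrywise piecewise-linear map, use nonnegativity of $A$ for concavity of $\phi$, and use invariance under affine precomposition for convexity of $\phi^{-1}$. Your explicit $\min/\max$ representation of $g$ and $h$ fills in a step the paper leaves as ``it can be shown,'' and your remark on the asymmetry (why the sign of $A^{-1}$ forces a different argument for $\phi^{-1}$) is a nice clarification not present in the paper.
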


\begin{proof} It will be helpful to express the function $\phi$ as a composition of functions given by $\phi(u) =  Af(u) + b$ , where $$f(u) := \etaIn u^+ + (1/\etaOut) u^-$$ for all $u \in \Rset^T$. Using the fact that $0 < \eta^c \leq 1/\eta^d$, it can be shown that the function $f$ is both concave and invertible, where the inverse of $f$ is given by
\begin{align*}
    f^{-1}(v) := (1/\etaIn) v^+ + \etaOut v^-
\end{align*}
for all $v \in \Rset^T$ and can be shown to be convex. 

Recall that the lower triangular matrix $A$ is entrywise nonnegative, since $A_{ij}=\ts \leak^{i-j} > 0$ for all $j\le i$. It follows that each component of the vector-valued function $\phi(u) = Af(u) + b$ is a nonnegative linear combination of concave functions (the components of $f$) plus an offset $b$. Thus,  $\phi$ is a concave function. 

To derive the inverse of $\phi$, let $u \in \Rset^T$ and define $x =  Af(u) + b$. Since both the matrix $A$ and the function $f$ are invertible, we have that
\begin{align*}
    u &= f^{-1}\left(A^{-1}(x-b)\right)\\
        &= \frac1\etaIn \left(A^{-1}(x -b)\right)^+ + \etaOut \left(A^{-1} (x -b)\right)^- \\
      &= \phi^{-1}(x).
\end{align*}

To complete the proof, note that each component of the vector-valued function $\phi^{-1}(x) = f^{-1}(A^{-1}(x-b))$ is given by a composition of a convex function with an affine function. Thus, $\phi^{-1}$ is a convex function.
\end{proof}

Given the invertibility of the map $\phi$, we can apply the substitution $u = \phi^{-1}(x)$ to reformulate problem \eqref{eq:opt} such that the energy profile $x$ becomes the sole decision variable. 
The resulting \emph{set of feasible energy profiles} can be expressed~as
\begin{align} \label{eq:xset}
    \Xset := \left \{x\in\Rset^T \, | \, \phi^{-1}(x) \in [-\underline{u}, \, \overline{u}], \  x \in [\underline{x}, \, \overline{x}] \right \}.
\end{align}

At first glance, the set $\Xset$ may appear to be nonconvex  due to the inequality constraint $   \phi^{-1}(x) \succeq - \underline{u}$ (recall from Thm.~\ref{lem:phi} that $\phi^{-1}$ is a convex function). However, the following result reveals that the set of feasible energy profiles is in fact a convex polytope by providing an equivalent representation of $\Xset$ as an intersection of half-spaces.
In Fig.~\ref{fig:xset}, we provide a simple example illustrating the convexity of the set $\Xset$.

\begin{thm} \rm \label{thm:xset} 
    The set of feasible energy profiles $\Xset$, as defined in Eq. \eqref{eq:xset}, is a convex polytope that can be expressed in half-space representation as
    \begin{align} \label{eq:xset_cvx}
        \Xset = \big\{x\in\Rset^T \, \big| \,  A^{-1} (x -b) \in \big[-{\textstyle \frac1\etaOut}\, \underline{u}, \, \etaIn \, \overline{u}\big], \ x \in [\underline{x},\, \overline{x}] \big\}.
    \end{align}
\end{thm}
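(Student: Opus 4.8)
The plan is to reduce the claimed identity \eqref{eq:xset_cvx} to a scalar, coordinatewise statement about the monotone piecewise-linear bijection $f$ introduced in the proof of Theorem~\ref{lem:phi}. Introduce the change of variables $v := A^{-1}(x-b)$, so that by Eq.~\eqref{eq:phi_inv} we have $\phi^{-1}(x) = f^{-1}(v)$ with $f^{-1}(v) = (1/\etaIn)v^+ + \etaOut v^-$. The constraint $x \in [\underline{x},\overline{x}]$ appears identically in the definition \eqref{eq:xset} of $\Xset$ and in the candidate representation \eqref{eq:xset_cvx}, so it suffices to show that for every $v \in \Rset^T$,
\[
f^{-1}(v) \in [-\underline{u},\,\overline{u}] \quad\Longleftrightarrow\quad v \in \big[-{\textstyle\frac{1}{\etaOut}}\,\underline{u},\ \etaIn\,\overline{u}\big].
\]
Both sides decouple across coordinates, so I would fix an index $t$ and work entirely with the scalar map $s \mapsto f^{-1}(s) = (1/\etaIn)s^+ + \etaOut s^-$ on $\Rset$.

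Next I would record the elementary facts that this scalar map is continuous, strictly increasing (slope $1/\etaIn>0$ on $\Rset_+$, slope $\etaOut>0$ on the negative axis, value $0$ at $0$), and hence a bijection of $\Rset$ onto itself whose inverse is the scalar version of $f$, namely $s \mapsto \etaIn s^+ + (1/\etaOut)s^-$. Since applying a strictly increasing bijection both preserves and reflects inequalities, the two-sided bound $-\underline{u}_t \le f^{-1}(v_t) \le \overline{u}_t$ holds if and only if $f(-\underline{u}_t) \le v_t \le f(\overline{u}_t)$. Because $\overline{u}_t \ge 0$ we have $f(\overline{u}_t) = \etaIn\overline{u}_t$, and because $-\underline{u}_t \le 0$ we have $f(-\underline{u}_t) = (1/\etaOut)(-\underline{u}_t) = -(1/\etaOut)\underline{u}_t$; substituting these gives exactly the coordinatewise bounds displayed above. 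Collecting over all $t$ and reinstating $v = A^{-1}(x-b)$ establishes the set identity \eqref{eq:xset_cvx}.

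To finish, I would note that the right-hand side of \eqref{eq:xset_cvx} is cut out by the inequalities $-(1/\etaOut)\underline{u} \preceq A^{-1}(x-b) \preceq \etaIn\overline{u}$ together with $\underline{x} \preceq x \preceq \overline{x}$, each of which is affine in $x$; thus $\Xset$ is an intersection of finitely many closed half-spaces, i.e.\ a polyhedron, and since it is contained in the bounded box $[\underline{x},\overline{x}]$ it is in fact a polytope, hence convex.

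I do not expect a genuine obstacle; the only step requiring care is the sign bookkeeping in the second paragraph, where one must use $\underline{u}_t,\overline{u}_t \ge 0$ to evaluate $f$ on the correct branch. This is also the conceptual crux: the apparent nonconvexity of $\Xset$ flagged before the theorem — coming from the convex constraint $\phi^{-1}(x) \succeq -\underline{u}$ — dissolves precisely because inverting the monotone map $f^{-1}$ turns that constraint into the linear constraint $A^{-1}(x-b) \succeq -(1/\etaOut)\underline{u}$.
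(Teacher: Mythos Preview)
Your proposal is correct and follows essentially the same route as the paper: both reduce the set identity to the coordinatewise equivalence $-\underline{u}_t \le f^{-1}(v_t) \le \overline{u}_t \iff -(1/\etaOut)\,\underline{u}_t \le v_t \le \etaIn\,\overline{u}_t$, using the signs $\overline{u}_t \ge 0$ and $-\underline{u}_t \le 0$ to land on the correct branch of the piecewise-linear map. Your framing via the strictly increasing bijection $f$ is slightly more explicit than the paper's direct inequality manipulation, and you additionally spell out why the resulting half-space description gives a bounded polyhedron (hence a polytope), which the paper leaves implicit.
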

\begin{proof} Let $x \in [\underline{x}, \, \overline{x}]$.
    To prove \eqref{eq:xset_cvx}, it suffices to show that 
    \begin{align*}
        -\underline{u} \preceq \phi^{-1}(x) \preceq \overline{u} \ \ \Longleftrightarrow \ \,
        -\frac1\etaOut\underline{u} \preceq A^{-1}(x-b) \preceq \etaIn\overline{u}.
    \end{align*}
    Recall the expression for $\phi^{-1}$ in Eq. \eqref{eq:phi_inv}.
Since $\overline{u} \succeq 0$, it holds that $ \phi^{-1}(x) \preceq \overline{u} $ if and only if $(1/\etaIn)  A^{-1}(x -b)  \preceq \overline{u}$. Since $-\underline{u} \preceq 0$, it also holds that $ \phi^{-1}(x) \succeq -\underline{u}  $ if and only if  $\etaOut  A^{-1}(x -b) \succeq -\underline{u} $. The desired result follows, since $\etaIn, \etaOut > 0$. 
\end{proof}

Applying the substitution $u = \phi^{-1}(x)$ and using the half-space representation of the set of feasible energy profiles  given in Eq. \eqref{eq:xset_cvx}, one can reformulate the original optimization problem \eqref{eq:opt} as
\begin{align}
\nonumber \text{minimize}    &   \  \ c(\phi^{-1}(x)) \\
\label{eq:opt_reform} \text{subject to}  &  \ \ \underline{x} \preceq x \preceq \overline{x},\\
\nonumber & \hspace{-.077in} - \frac1\etaOut\underline{u} \preceq A^{-1}(x-b) \preceq \etaIn\overline{u},
\end{align}
where the energy profile $x \in \Rset^T$ is the sole decision variable. 

\newcounter{rowcount}
\setcounter{rowcount}{0}
\begin{table*}[ht!]
    \centering
    \begingroup
    \setlength{\tabcolsep}{6pt} 
    \renewcommand{\arraystretch}{2.6} 
    \begin{tabular}{@{\stepcounter{rowcount}(\alph{rowcount})\hspace*{\tabcolsep}}lllcc}
    \hline \hline 
    \multicolumn{1}{>{\makebox[-4pt][r]{}}l}{ \textbf{Application}}
    &  \textbf{Cost function $c$}  & \textbf{Parameters} & \textbf{Assumptions}   & \textbf{Convexity of $c\circ \phi^{-1}$}\\ 
    \hline 
    Peak shaving    & ${ \max_{t\in\{0,\dots,T-1\}} |u_t+\ell_t|}$    & Load: $\ell_t \in \Rset$ & $\ell_t \ge 0, \ \forall  t $       & \tikzcheckmark \\  \hline 
    Load balancing     & $ \sum_{t=0}^{T-1} (u_t+\ell_t)^2$    &  Load: $\ell_t \in \Rset$ & $\ell_t \ge 0, \ \forall t $       & \tikzcheckmark \\  \hline  
    Power regulation     & $ \sum_{t=0}^{T-1} |u_t - r_t|$    & Regulation signal: $r_t \in \Rset$ &  $r_t \leq 0, \ \forall t$ &  \tikzcheckmark\\  \hline 
    Energy arbitrage     & $ \sum_{t=0}^{T-1} p_t^{\rm buy} u_t^+ + p_t^{\rm sell} u_t^-$    & Prices: $p_t^{\rm buy}, p_t^{\rm sell}\in\Rset$ & $   (1/\etaIn) p_t^{\rm buy} \ge \etaOut p_t^{\rm sell} , \ \forall t$       & \tikzcheckmark \\  \hline  
    Power smoothing     & $ \sum_{t=1}^{T-1} |(s_{t}-u_{t})-( s_{t-1}- u_{t-1})|$    & Renewable power: $s_t \in \Rset$  &  ---   & \tikzxmark \\  \hline
    \hline
    \end{tabular}
    \endgroup
    \caption{This table presents several examples of lossy energy storage optimization problems along with their associated cost function $c$. The convexity of the objective function $c \circ \phi^{-1}$ under the  reformulation \eqref{eq:opt_reform} is determined for each example.}
    \label{tab:examples}
\end{table*}

While the feasible set of the proposed reformulation \eqref{eq:opt_reform} is guaranteed to be a convex set, the reformulated objective function $c \circ \phi^{-1}$ may be nonconvex. In \cite{hashmi2019optimal}, the composite function $c \circ \phi^{-1}$ is shown to be convex if the original cost function $c$ belongs to a specific class of convex piecewise linear functions that arise in net energy metering applications.
In what follows, we characterize a more general class of convex cost functions $c$ that preserve  convexity of the composition $c \circ \phi^{-1}$. To begin, we introduce the following definition.
 
\begin{defn} \rm
    A function $f: \mathbb{R}^n \rightarrow \mathbb{R}$ is defined to be   \emph{coordinate-wise nondecreasing  on a set} $\mathcal{S} \subseteq \mathbb{R}$ if  for each coordinate $i \in \{1, \dots, n\}$ and any two elements $x, y \in \mathbb{R}^n$  with $x_i, y_i \in \mathcal{S}$, $x_i \leq y_i$, and $x_j = y_j$ for all $j \neq i$, we have $f(x) \leq f(y)$. 
\end{defn}
In words, a multivariate function is coordinate-wise nondecreasing on a subset of the real number line if the function is nondecreasing in each variable on that subset.

\begin{thm} \label{thm:cvx_cond} \rm 
If $c:\mathbb{R}^T \rightarrow \mathbb{R}$ is convex and coordinate-wise nondecreasing on $[0,\infty)$, then    $c \circ \phi^{-1}$ is convex.
\end{thm}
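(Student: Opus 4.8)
The plan is to peel off all affine structure and then run a hands-on convexity argument driven by the monotonicity hypothesis. Since $\phi^{-1}(x)=f^{-1}(A^{-1}(x-b))$ by Theorem~\ref{lem:phi} and $x\mapsto A^{-1}(x-b)$ is affine, composition with an affine map preserves convexity, so it suffices to show that $g:=c\circ f^{-1}$ is convex on $\Rset^T$, where $f^{-1}(v)=(1/\etaIn)v^{+}+\etaOut v^{-}$. The crucial algebraic observation is that $f^{-1}$ can be rewritten as $f^{-1}(v)=\etaOut v+\gamma\,v^{+}$ with $\gamma:=1/\etaIn-\etaOut\ge 0$ (the inequality $\gamma\ge 0$ is just $\etaIn\etaOut\le 1$, already used in Theorem~\ref{lem:phi}), and that componentwise $(f^{-1}(v))_i^{+}=(1/\etaIn)\,v_i^{+}$. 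In particular $f^{-1}$ is affine plus the componentwise-convex, nonnegative term $\gamma v^{+}$.

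To prove $g$ convex, fix $v,w\in\Rset^T$, $\theta\in[0,1]$, and put $z:=\theta v+(1-\theta)w$. Componentwise convexity of $v\mapsto v^{+}$ gives $f^{-1}(z)\preceq \theta f^{-1}(v)+(1-\theta)f^{-1}(w)$, and I claim the residual $\gamma\delta:=\theta f^{-1}(v)+(1-\theta)f^{-1}(w)-f^{-1}(z)\succeq 0$ obeys, for every coordinate $i$,
\[
    0\ \le\ \gamma\delta_i\ =\ \gamma\bigl(\theta v_i^{+}+(1-\theta)w_i^{+}-z_i^{+}\bigr)\ \le\ \gamma\bigl(\theta v_i^{+}+(1-\theta)w_i^{+}\bigr)\ \le\ \theta\,(f^{-1}(v))_i^{+}+(1-\theta)\,(f^{-1}(w))_i^{+},
\]
the last step because $\gamma\le 1/\etaIn$ and $(f^{-1}(\cdot))_i^{+}=(1/\etaIn)(\cdot)_i^{+}$. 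Hence the residual can be split as $\gamma\delta_i=\theta e_{v,i}+(1-\theta)e_{w,i}$ with $0\le e_{v,i}\le (f^{-1}(v))_i^{+}$ and $0\le e_{w,i}\le (f^{-1}(w))_i^{+}$ (for instance the proportional choice $e_{v,i}=\mu_i(f^{-1}(v))_i^{+}$, $e_{w,i}=\mu_i(f^{-1}(w))_i^{+}$ for an appropriate $\mu_i\in[0,1]$, and $e_{v,i}=e_{w,i}=0$ when both upper bounds vanish). This yields
\[
    f^{-1}(z)=\theta\bigl(f^{-1}(v)-e_v\bigr)+(1-\theta)\bigl(f^{-1}(w)-e_w\bigr),
\]
where $f^{-1}(v)-e_v$ is obtained from $f^{-1}(v)$ by lowering coordinates, and in each coordinate $i$ with $e_{v,i}>0$ one has $(f^{-1}(v))_i>0$ and $(f^{-1}(v))_i-e_{v,i}\ge 0$, so the lowering stays within $[0,\infty)$ (and symmetrically for $w$). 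Decreasing these coordinates one at a time and invoking that $c$ is coordinate-wise nondecreasing on $[0,\infty)$ gives $c(f^{-1}(v)-e_v)\le c(f^{-1}(v))$ and $c(f^{-1}(w)-e_w)\le c(f^{-1}(w))$, whence, by convexity of $c$,
\[
    c(f^{-1}(z))\ \le\ \theta\,c(f^{-1}(v)-e_v)+(1-\theta)\,c(f^{-1}(w)-e_w)\ \le\ \theta\,c(f^{-1}(v))+(1-\theta)\,c(f^{-1}(w)).
\]
Thus $g$, and therefore $c\circ\phi^{-1}$, is convex.

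The main obstacle is reconciling the two hypotheses on $c$: the naive route of bounding $c(f^{-1}(z))\le c\bigl(\theta f^{-1}(v)+(1-\theta)f^{-1}(w)\bigr)$ by absorbing the residual $\gamma\delta$ via monotonicity fails, since $f^{-1}$ maps onto all of $\Rset^T$ and both arguments can have negative components on which $c$ is not assumed monotone. The fix is to push the residual onto the \emph{endpoints} $f^{-1}(v),f^{-1}(w)$ rather than the midpoint, and to do so in amounts bounded by $(f^{-1}(v))_i^{+}$ and $(f^{-1}(w))_i^{+}$, which keeps the perturbed endpoints in the nonnegative region where monotonicity is available while leaving convexity of $c$ free to act on the midpoint identity. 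Establishing the displayed residual bound — ultimately the inequality $\gamma\le 1/\etaIn$ — is the technical crux.
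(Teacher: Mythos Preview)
Your argument is correct and takes a genuinely different route from the paper's. Both proofs make the same initial reduction: since $\phi^{-1}(x)=f^{-1}(A^{-1}(x-b))$ with $A^{-1}(x-b)$ affine, it suffices to show $c\circ f^{-1}$ is convex. From there the paper proceeds by a sign split on the midpoint: for $T=1$ it treats the cases $z\ge 0$ and $z\le 0$ separately (the first uses monotonicity at the midpoint together with convexity of $f^{-1}$; the second uses that $f^{-1}$ is linear on $(-\infty,0]$, applies convexity of $c$ there, and then invokes monotonicity at the endpoints via $c(\etaOut x)\le c(f^{-1}(x))$), and then asserts that the multivariate case follows ``coordinate-wise.'' Your approach instead uses the decomposition $f^{-1}(v)=\etaOut v+\gamma v^{+}$ to isolate the convex defect $\gamma\delta$, bounds it componentwise by $\theta(f^{-1}(v))^{+}+(1-\theta)(f^{-1}(w))^{+}$, and splits it proportionally so that the perturbed endpoints $f^{-1}(v)-e_v$ and $f^{-1}(w)-e_w$ never leave the region where coordinate-wise monotonicity is available. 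The payoff is that your argument is uniform in $T$: there is no case analysis on the sign pattern of $z$, and the passage from $T=1$ to general $T$ is already built in rather than left as a ``similarly.'' The cost is the residual-splitting construction, but that is elementary once the bound $\gamma\le 1/\etaIn$ is noted. Your closing remark correctly identifies why the ``naive'' midpoint-monotonicity step (which is essentially the paper's $z\ge 0$ branch) does not suffice on its own in the multivariate setting.
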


\begin{proof}
To streamline the proof, we introduce a  function $g:\Rset^T \rightarrow \mathbb{R}^T$, defined   as   $g(x) := (1/\etaIn) x^+ + \etaOut x^-$ for all $x \in \Rset^T$.  It can be shown that the function $g$ is convex, since $0<\etaOut \le 1/\etaIn$. Using this newly defined function, we have that $(c \circ \phi^{-1})(x) = (c \circ g)(A^{-1}(x-b))$ for all $x \in \Rset^T$.
Thus, to prove that $c \circ \phi^{-1}$ is convex, it suffices to show that the function $c \circ g$ is convex (since the composition of a convex function with an affine function is also convex). 

We begin by proving that   $c \circ g$ is convex for the special case of $T=1$. Let $x, y \in \Rset$, $\theta  \in [0, \, 1]$, and define $z := \theta x + (1-\theta)y$. 
If $z \geq 0$, then 
\begin{align*}
    (c\circ g)(z) & \leq c(   \theta g(x)  + (1- \theta) g(y))\\
    & \leq \theta c(g(x)) + (1- \theta) c(g(y)).
\end{align*}
Since $z \geq 0$, it holds that $g(z) = (1/\etaIn) z \geq 0$. 
This fact, together with the assumption that $c$ is nondecreasing on $[0, \infty)$ and the convexity of $g$, implies the first inequality. 
The second inequality follows from the convexity of  $c$.
If  $z \leq 0$, then
\begin{align*}
    (c\circ g)(z) = c(\etaOut z) & \leq \theta c(\etaOut x) + (1- \theta) c(\etaOut y) \\
    & \leq \theta c(g(x)) + (1- \theta) c(g(y)).
\end{align*}
The first equality follows from $z\leq 0$. 
The first inequality follows from the convexity of $c$. 
To see why the second inequality is true, consider the term $c(\etaOut x)$.
If $x\le 0$, then $c(\etaOut x) = c(g(x))$. Conversely, if $x\ge 0$, then $c(\etaOut x) \le c((1/\etaIn) x) = c(g(x))$, since $c$ is nondecreasing on $[0, \infty)$ and $0 < \etaOut\le 1/\etaIn$.
Hence, $c(\etaOut x) \le c(g(x))$ for all $x\in\Rset$.  It also holds that $c(\etaOut y) \le c(g(y))$ for all $y\in\Rset$, yielding the second inequality. The case of $T > 1$ is handled similarly, where the same arguments can be applied in a ``coordinate-wise'' fashion to establish the convexity of $c \circ g$. 
\end{proof}

To ensure convexity of the composite function $c\circ \phi^{-1}$,
 Thm.~\ref{thm:cvx_cond}  requires the function $c$ to be nondecreasing with respect to the power \emph{injected} into the energy storage system at every time period. 
This sufficient condition generalizes existing results that require, more strongly, that the objective function be monotonic in both the charging and discharging power variables \cite{li2018extended,garifi2020convex, lin2023relaxing}.
We note that Thm. \ref{thm:cvx_cond} can also be viewed as an extension of standard conditions for convexity of composite functions.
For example, the conditions provided in \cite[Chapter 3.2.4]{boyd2004convex} require that the function $c$ be nondecreasing in each argument over the \emph{entire} real number line to ensure convexity of $c\circ\phi^{-1}$.

\subsection{Discussion}

While there are many objective functions of practical interest that satisfy the sufficient conditions for convexity in Thm.~\ref{thm:cvx_cond}, the proposed reformulation \eqref{eq:opt_reform} is not a panacea. 
In Table \ref{tab:examples}, we provide a variety of examples where the composite function $c \circ \phi^{-1}$ is guaranteed to be convex.  We also include an example where convexity is not guaranteed.

In examples (a) through (c), the given cost function $c$ is convex in the storage power profile $u$. 
The additional assumptions stated alongside these examples also ensure that the cost function $c$ is coordinate-wise nondecreasing on the set $[0, \infty)$. 
Hence, the resulting composite function $c\circ\phi^{-1}$ is convex, as guaranteed by Thm. \ref{thm:cvx_cond}.
It is important to point out that examples (a) through (c) do not satisfy the conditions for convexity provided in \cite{li2018extended,garifi2020convex, lin2023relaxing}, which require the cost function $c$ to be coordinate-wise nonincreasing on $(-\infty,0]$ (i.e., nonincreasing in the discharging power variables). 

In the energy arbitrage example (d), the given cost function $c$ is convex and coordinate-wise nondecreasing on $[0, \infty)$ if the energy prices satisfy the inequality $p_t^{\rm buy} \geq (p_t^{\rm sell})^+$ for every stage $t$.  However, it can be shown that the composite function $c \circ \phi^{-1}$ is convex if the energy prices satisfy a weaker condition given by $(1/\etaIn) p_t^{\rm buy} \ge \etaOut p_t^{\rm sell}$ for every stage $t$. Interestingly, the family of cost functions satisfying this condition includes cost functions $c$ that are \emph{not} coordinate-wise nondecreasing on $[0, \infty)$. This reveals that the sufficient conditions for convexity provided in Thm.~\ref{thm:cvx_cond} are not necessary in general. 

\section{Conclusion} \label{sec:conclusion}
In this \paper, we have examined a class of optimization problems involving the optimal operation of a single lossy energy storage system in the absence of power transmission constraints.  We have provided an equivalent reformulation for this class of problems, along with sufficient conditions for the convexity of the proposed reformulation.

\bibliographystyle{IEEEtran}
\bibliography{references}{\markboth{References}{References}}

\begin{thebibliography}{10}
\providecommand{\url}[1]{#1}
\csname url@samestyle\endcsname
\providecommand{\newblock}{\relax}
\providecommand{\bibinfo}[2]{#2}
\providecommand{\BIBentrySTDinterwordspacing}{\spaceskip=0pt\relax}
\providecommand{\BIBentryALTinterwordstretchfactor}{4}
\providecommand{\BIBentryALTinterwordspacing}{\spaceskip=\fontdimen2\font plus
\BIBentryALTinterwordstretchfactor\fontdimen3\font minus \fontdimen4\font\relax}
\providecommand{\BIBforeignlanguage}[2]{{%
\expandafter\ifx\csname l@#1\endcsname\relax
\typeout{** WARNING: IEEEtran.bst: No hyphenation pattern has been}%
\typeout{** loaded for the language `#1'. Using the pattern for}%
\typeout{** the default language instead.}%
\else
\language=\csname l@#1\endcsname
\fi
#2}}
\providecommand{\BIBdecl}{\relax}
\BIBdecl

\bibitem{elsaadany2023battery}
M.~Elsaadany and M.~R. Almassalkhi, ``Battery optimization for power systems: Feasibility and optimality,'' in \emph{2023 62nd IEEE Conference on Decision and Control (CDC)}.\hskip 1em plus 0.5em minus 0.4em\relax IEEE, 2023, pp. 562--569.

\bibitem{chen2013mpc}
C.~Chen, J.~Wang, Y.~Heo, and S.~Kishore, ``{MPC-based appliance scheduling for residential building energy management controller},'' \emph{IEEE Transactions on Smart Grid}, vol.~4, no.~3, pp. 1401--1410, 2013.

\bibitem{jabr2014robust}
R.~A. Jabr, S.~Karaki, and J.~A. Korbane, ``Robust multi-period opf with storage and renewables,'' \emph{IEEE Transactions on Power Systems}, vol.~30, no.~5, pp. 2790--2799, 2014.

\bibitem{parisio2014model}
A.~Parisio, E.~Rikos, and L.~Glielmo, ``A model predictive control approach to microgrid operation optimization,'' \emph{IEEE Transactions on Control Systems Technology}, vol.~22, no.~5, pp. 1813--1827, 2014.

\bibitem{erdinc2014smart}
O.~Erdinc, N.~G. Paterakis, T.~D. Mendes, A.~G. Bakirtzis, and J.~P. Catal{\~a}o, ``{Smart household operation considering bi-directional EV and ESS utilization by real-time pricing-based DR},'' \emph{IEEE Transactions on Smart Grid}, vol.~6, no.~3, pp. 1281--1291, 2014.

\bibitem{wu2015stochastic}
H.~Wu, A.~Pratt, and S.~Chakraborty, ``Stochastic optimal scheduling of residential appliances with renewable energy sources,'' in \emph{2015 IEEE Power \& Energy Society General Meeting}.\hskip 1em plus 0.5em minus 0.4em\relax IEEE, 2015, pp. 1--5.

\bibitem{hemmati2017technical}
R.~Hemmati, ``Technical and economic analysis of home energy management system incorporating small-scale wind turbine and battery energy storage system,'' \emph{Journal of Cleaner Production}, vol. 159, pp. 106--118, 2017.

\bibitem{rostamiyan2023miqp}
M.~Rostamiyan, M.~H. Abardeh, A.~A. Azarfar, M.~S. Moghaddam, and N.~Salehi, ``A {MIQP} approach based on demand response in distribution networks to improve the multi-objective function in the presence of renewable energy resources and batteries to the subway systems,'' \emph{IEEE Access}, vol.~11, pp. 17\,603--17\,612, 2023.

\bibitem{haessig2021convex}
P.~Haessig, ``Convex storage loss modeling for optimal energy management,'' in \emph{2021 IEEE Madrid PowerTech}.\hskip 1em plus 0.5em minus 0.4em\relax IEEE, 2021, pp. 1--6.

\bibitem{pozo2022linear}
D.~Pozo, ``Linear battery models for power systems analysis,'' \emph{Electric Power Systems Research}, vol. 212, p. 108565, 2022.

\bibitem{pozo2023convex}
------, ``Convex hull formulations for linear modeling of energy storage systems,'' \emph{IEEE Transactions on Power Systems}, 2023.

\bibitem{nazir2021guaranteeing}
N.~Nazir and M.~Almassalkhi, ``Guaranteeing a physically realizable battery dispatch without charge-discharge complementarity constraints,'' \emph{IEEE Transactions on Smart Grid}, 2021.

\bibitem{garifi2018control}
K.~Garifi, K.~Baker, D.~Christensen, and B.~Touri, ``Control of energy storage in home energy management systems: Non-simultaneous charging and discharging guarantees,'' \emph{arXiv preprint arXiv:1805.00100}, 2018.

\bibitem{wu2016distributed}
D.~Wu, T.~Yang, A.~A. Stoorvogel, and J.~Stoustrup, ``Distributed optimal coordination for distributed energy resources in power systems,'' \emph{IEEE Transactions on Automation Science and Engineering}, vol.~14, no.~2, pp. 414--424, 2016.

\bibitem{castillo2013profit}
A.~Castillo and D.~F. Gayme, ``Profit maximizing storage allocation in power grids,'' in \emph{52nd IEEE Conference on Decision and Control}.\hskip 1em plus 0.5em minus 0.4em\relax IEEE, 2013, pp. 429--435.

\bibitem{li2015sufficient}
Z.~Li, Q.~Guo, H.~Sun, and J.~Wang, ``Sufficient conditions for exact relaxation of complementarity constraints for storage-concerned economic dispatch,'' \emph{IEEE Transactions on Power Systems}, vol.~31, no.~2, pp. 1653--1654, 2015.

\bibitem{duan2016improved}
C.~Duan, L.~Jiang, W.~Fang, X.~Wen, and J.~Liu, ``Improved sufficient conditions for exact convex relaxation of storage-concerned {ED},'' \emph{arXiv preprint arXiv:1603.07875}, 2016.

\bibitem{li2018extended}
Z.~Li, Q.~Guo, H.~Sun, and J.~Wang, ``Extended sufficient conditions for exact relaxation of the complementarity constraints in storage-concerned economic dispatch,'' \emph{CSEE Journal of Power and Energy Systems}, vol.~4, no.~4, pp. 504--512, 2018.

\bibitem{garifi2020convex}
K.~Garifi, K.~Baker, D.~Christensen, and B.~Touri, ``Convex relaxation of grid-connected energy storage system models with complementarity constraints in {DC OPF},'' \emph{IEEE Transactions on Smart Grid}, vol.~11, no.~5, pp. 4070--4079, 2020.

\bibitem{chen2022battery}
Y.~Chen and R.~Baldick, ``Battery storage formulation and impact on day ahead security constrained unit commitment,'' \emph{IEEE Transactions on Power Systems}, vol.~37, no.~5, pp. 3995--4005, 2022.

\bibitem{lin2023relaxing}
W.~Lin, C.~Chung, and C.~Zhao, ``Relaxing complementarity constraints of energy storage with feasibility and optimality guarantees,'' in \emph{2023 IEEE Power \& Energy Society General Meeting}, 2023.

\bibitem{hashmi2019optimal}
M.~U. Hashmi, A.~Mukhopadhyay, A.~Bu{\v{s}}i{\'c}, J.~Elias, and D.~Kiedanski, ``Optimal storage arbitrage under net metering using linear programming,'' in \emph{2019 IEEE Int. Conf. on Communications, Control, and Computing Technologies for Smart Grids (SmartGridComm)}, 2019.

\bibitem{boyd2004convex}
S.~Boyd and L.~Vandenberghe, \emph{Convex optimization}.\hskip 1em plus 0.5em minus 0.4em\relax Cambridge university press, 2004.

\end{thebibliography}

\end{document}